\newcommand{\be}{\begin{equation}}
\newcommand{\ee}{\end{equation}}
\newcommand{\derv}[1]{\frac{\partial}{\partial #1}}
\newcommand{\deriv}[2]{\frac{\partial #1}{\partial #2}}
\newcommand{\beqn}{\begin{equation}}
\newcommand{\eeqn}{\end{equation}}
\newcommand{\beqnar}{\begin{eqnarray}}
\newcommand{\eeqnar}{\end{eqnarray}}
\newtheorem{theorem}{Theorem}[section]
\newtheorem{proposition}[theorem]{Proposition}
         \let\leq=\leqslant
\providecommand\boldsymbol[1]{\mbox{\boldmath $##1$}}}
\newsavebox{\astrutbox}
\sbox{\astrutbox}{\rule[-5pt]{0pt}{20pt}}
\title[Multi-Symplectic Magnetohydrodynamics: II]
{Multi-Symplectic Magnetohydrodynamics: II, Addendum and Erratum}
\author[G. M. Webb, J. F. McKenzie and G. P. Zank]%
{G.\ns M.\ns W\ls E\ls B\ls B$^1$%
 \thanks{Email address for correspondence: gmw0002@uah.edu},\ns
J.\ls F.\ns M\ls c\ls K\ls E\ls N\ls Z\ls I\ls E$^{1,3}%
 \thanks{deceased}$\break
 \and G.\ns P.\ns Z\ls A\ls N\ls K$^{1,2}$}
\affiliation{$^1$Center for Space Plasma and Aeronomic Research, 
The University of Alabama in Huntsville, 
Huntsville AL 35805, USA\\[\affilskip]
$^2$Department of Space Science, The University of
Alabama in Huntsville, Huntsville AL 35805, USA\\
$^3$Department of Mathematics and Statistics, 
Durban University of Technology,\\
Steve Biko Campus, Durban South Africa, and School of Mathematical Sciences,
University of KwaZulu-Natal, Durban South Africa}
\begin{document}


\maketitle

\begin{abstract}
A recent paper by \cite{Webb14c} on multi-symplectic magnetohydrodynamics 
(MHD) using  Clebsch variables in an Eulerian action principle
with constraints is further extended. We relate a class of 
symplecticity conservation laws to a vorticity conservation law, 
and provide a corrected form of the Poincar\'e-Cartan differential 
form formulation of the system. We also correct some typographical errors
(omissions) in \cite{Webb14c}. We show that the vorticity-symplecticity 
conservation law, that arises as a compatibility condition on the 
system, expressed in terms of the Clebsch variables is equivalent 
to taking the curl of the conservation form of the MHD momentum equation. 
We use the Cartan-Poincar\'e form to obtain a class of differential forms that represent the system 
using Cartan's geometric theory of partial differential equations.
\end{abstract}


\maketitle


\section{Introduction}
\index{symplectic}%
Multi-symplectic equations for
\index{Hamiltonian}%
Hamiltonian systems with two or more
independent variables $x^\alpha$ have been developed 
as a useful extension of
Hamiltonian systems with one evolution variable $t$.
This development has connections with dual
variational formulations
of traveling wave problems (e.g. \cite{Bridges92}), and is useful in 
numerical schemes for
Hamiltonian systems. Bridges and co-workers used the multi-symplectic
approach to study
 linear and nonlinear wave propagation, 
 generalizations of 
wave action, wave modulation 
theory, and wave stability problems (\cite{Bridges97a}, \cite{Bridges97b}). 
\cite{BridgesReich06} develop multi-symplectic difference schemes. 
Multi-symplectic Hamiltonian systems 
have been studied by 
\cite{Marsden99} and  \cite{Bridges05}.  
\cite{Webb07b,Webb08,Webb14d} discuss 
traveling waves in multi-fluid plasmas using a multi-symplectic formulation. 
 \cite{Holm98} 
give an overview of Hamiltonian systems, semi-direct 
product Lie algebras and Euler-Poincar\'e equations.  

\cite{Cotter07} developed a multi-symplectic, Euler-Poincar\'e
formulation of fluid mechanics.  They showed  
that  multi-symplectic ideal fluid mechanics type systems are 
 related to  Clebsch variable formulations  
 in which the Lagrange multipliers 
play the role of canonically conjugate momenta to the 
constrained variables.  
Thus, the Clebsch variable formulation involves a momentum map.  

The main aim of the present paper is to correct and extend the 
analysis of the multi-symplectic MHD equations derived by \cite{Webb14c}.
Section 2 describes the MHD equations and the first law of thermodynamics. 
An overview of the key equations of the multi-symplectic 
MHD equations obtained by \cite{Webb14c} is given in Section 3.  
We correct some typographical 
errors and present a more consistent description of the Cartan-Poincar\'e 
form for the system (section 4).  The Cartan-Poincar\'e
form is used to obtain a set of differential 
forms representing the system using Cartan's geometric theory of 
partial differential equations. 

Section 5 concludes with a summary and discussion.

\section{The Model}

The magnetohydrodynamic equations are:

\beqnar
&&\deriv{\rho}{t}+\nabla{\bf\cdot} (\rho{\bf u})=0, \label{eq:2.1}\\
&&\derv{t}(\rho{\bf u})+\nabla{\bf\cdot}\left[\rho {\bf u}{\bf u}
+\left(p+\frac{B^2}{2\mu}\right){\bf I} -\frac{\bf B B}{\mu}\right]=0,
\label{eq:2.2}\\
&&\deriv{S}{t}+{\bf u}{\bf\cdot}\nabla S=0, 
\label{eq:2.3}\\
&&\deriv{\bf B}{t}-\nabla\times\left({\bf u}\times {\bf B}\right)
+{\bf u}\nabla{\bf\cdot}{\bf B}=0, \label{eq:2.4}
\eeqnar
where $\rho$, ${\bf u}$, $p$,
$S$ and ${\bf B}$ are the gas density, fluid velocity, pressure,
specific entropy, and magnetic induction ${\bf B}$ respectively, and
${\bf I}$ is the unit $3\times 3$ dyadic.
The gas pressure $p=p(\rho,S)$ is a function of the density $\rho$ and
entropy $S$, and $\mu$ is the magnetic permeability. 
Equations (\ref{eq:2.1})-(\ref{eq:2.2}) are the mass and  momentum  
conservation laws, (\ref{eq:2.3}) is the entropy advection equation 
and (\ref{eq:2.4}) is Faraday's equation in the MHD limit. 
In classical MHD, (\ref{eq:2.1})-(\ref{eq:2.4}) are supplemented by
Gauss' law:
\beqn
\nabla{\bf\cdot}{\bf B}=0. \label{eq:2.5}
\eeqn
which implies the non-existence of magnetic monopoles. The MHD equations are closed 
by specifying an equation of state for the gas $U=U(\rho,S)$ where 
$U(\rho,S)$ is the internal energy per unit mass. The first law of thermodynamics
has the form:
\beqn
TdS=dQ=dU+pd\tau\quad\hbox{where}\quad \tau=\frac{1}{\rho}, \label{eq:2.7}
\eeqn
where  $\tau=1/\rho$ is the specific
volume. Using the internal energy per unit volume $\varepsilon=\rho U$
instead of $U$, (\ref{eq:2.7}) may be written as:
\beqn
TdS=\frac{1}{\rho}\left(d\varepsilon-hd\rho\right)\quad\hbox{where}\quad
h=\frac{\varepsilon+p}{\rho}, \label{eq:2.8}
\eeqn
is the enthalpy of the gas. Equation
 (\ref{eq:2.8}) gives  the formulae:
\beqn
 \rho T=\varepsilon_S, \quad h=\varepsilon_\rho,
\quad p=\rho\varepsilon_{\rho}-\varepsilon, \label{eq:2.9}
\eeqn
relating the temperature $T$, enthalpy $h$ and pressure $p$ to the
internal energy density $\varepsilon(\rho,S)$. 

\section{Multi-Symplectic Appproach}
In this section we give a brief overview of the multi-symplectic formulation of MHD by 
\cite{Webb14c}. 

\subsection{Clebsch Variable Action Principle}
Consider the MHD action (modified by constraints):
\beqn
J=\int\ d^3x\ dt  L,  \label{eq:Clebsch1}
\eeqn
where
\begin{align}
L=&\left\{\frac{1}{2}\rho u^2-\epsilon(\rho, S)-\frac{B^2}{2\mu_0}\right\}
+\phi\left(\deriv{\rho}{t}+\nabla{\bf\cdot}(\rho {\bf u})\right)\nonumber\\
&+\beta\left(\deriv{S}{t}+{\bf u}{\bf\cdot}\nabla S\right)
+\lambda\left(\deriv{\mu}{t}+{\bf u\cdot}\nabla\mu\right) \nonumber\\
&+\boldsymbol{\Gamma}{\bf\cdot}\left(\deriv{\bf B}{t}-\nabla\times({\bf u}\times{\bf B})
+{\bf u}(\nabla{\bf\cdot B})\right). \label{eq:Clebsch2}
\end{align}
In (\ref{eq:Clebsch2}) the Lagrange multipliers $\phi$, $\beta$, $\lambda$ and 
$\boldsymbol{\Gamma}$ ensure that the mass, entropy, Lin constraint and Faraday's equation 
are satisfied. Note we do not set $\nabla{\bf\cdot}{\bf B}=0$ in our analysis. 
It can be set equal to zero after all variational calculations for the MHD system
are finished. However, it does modify the Lagrange multiplier equation for $\boldsymbol{\Gamma}$
that ensures Faraday's equation is satisfied (see \cite{Morrison80}, \cite{Morrison82},\cite{Morrison82a}, 
\cite{Holm83a},\cite{Holm83b} and \cite{Chandre13} for further discussion 
of the MHD Poisson bracket). 
The Lagrangian in curly brackets equals the kinetic minus
the potential energy (internal thermodynamic energy plus magnetic energy).
The Lagrange multipliers $\phi$, $\beta$, $\lambda$, 
and $\boldsymbol{\Gamma}$ ensure that the 
mass, entropy, Lin constraint, Faraday equations are satisfied.

 Stationary point conditions for the action are $\delta J=0$.
 $\delta J/\delta {\bf u}=0$ gives the Clebsch representation
for ${\bf M}=\rho {\bf u}$:
\begin{equation}
{\bf M}=\rho {\bf u}=\rho \nabla\phi-\beta\nabla S-\lambda\nabla\mu+
{\bf B}\times(\nabla\times\boldsymbol{\Gamma})-\boldsymbol{\Gamma}\nabla{\bf\cdot}{\bf B}.
\label{eq:Clebsch3}
\end{equation}
 Setting $\delta J/\delta\phi$, $\delta J/\delta\beta$,
$\delta J/\delta \lambda$, $\delta J/\delta\boldsymbol{\Gamma}$  
equal to zero gives the mass, entropy advection, Lin constraint, 
and Faraday (magnetic flux conservation) constraint
equations:
\begin{align}
&\rho_t+\nabla{\bf\cdot}(\rho {\bf u})=0,\nonumber\\
&S_t+{\bf u}{\bf\cdot}\nabla S=0,\nonumber\\
&\mu_t+{\bf u\cdot}\nabla\mu=0, \nonumber\\
&{\bf B}_t-\nabla\times({\bf u}\times{\bf B})+{\bf u}(\nabla{\bf\cdot B})=0.
\label{eq:Clebsch5}
\end{align}

Similarly, setting $\delta J/\delta\rho$, $\delta J/\delta S$, $\delta J/\delta\mu$,
$\delta J/\delta {\bf B}$ equal to zero gives evolution equations 
for the Clebsch potentials $\phi$, $\beta$, $\lambda$ and $\boldsymbol{\Gamma}$
as:
\begin{align}
&-\left(\deriv{\phi}{t}+{\bf u}{\bf\cdot}\nabla\phi\right)
+\frac{1}{2} u^2-h=0, \label{eq:Clebsch6}\\
&\deriv{\beta}{t}+\nabla{\bf\cdot}(\beta {\bf u})+\rho T=0, 
\label{eq:Clebsch7}\\
&\deriv{\lambda}{t}+\nabla{\bf\cdot}(\lambda {\bf u})=0, 
\label{eq:Clebsch8}\\
&\deriv{\boldsymbol{\Gamma}}{t}
-{\bf u}\times (\nabla\times\boldsymbol{\Gamma})
+\nabla(\boldsymbol{\Gamma}{\bf\cdot u})+\frac{\bf B}{\mu_0}=0. 
\label{eq:Clebsch9}
\end{align}
Equation (\ref{eq:Clebsch6}) is Bernoulli's equation. 
The $\nabla(\boldsymbol{\Gamma}{\bf\cdot u})$ term 
in (\ref{eq:Clebsch9}) is associated with
$\nabla{\bf\cdot}{\bf B}\neq 0$.

\subsection{Multi-symplectic approach}
From \cite{Webb14c}, the MHD system was written in the multi-symplectic form:
\begin{equation}
{\sf K}^\alpha_{ij}\deriv{z^j}{x^\alpha}=\deriv{H}{z^i}, \label{eq:msymp1}
\end{equation}
where the variables $z^j$ are defined as:
\begin{equation}
{\bf z}=\left(u^1,u^2,u^3,\rho,S,\mu,B^1,B^2,B^3,\Gamma^1,\Gamma^2,\Gamma^3,\lambda,\beta,\phi\right)^T. 
\label{eq:msymp2}
\end{equation}
The variables: $x^\alpha=(t,x,y,z)$ are the independent space
and time variables, i.e. $x^0=t$, $x^1=x$, $x^2=y$, $x^3=z$. Below we use the notation $z^s_{,\alpha}=\partial z^s/\partial x^\alpha$. 
The constrained Lagrangian $L$ in (\ref{eq:Clebsch2}) is written in the form:
\begin{equation}
L=L_0+\sum_s L^\alpha_s z^s_{,\alpha}, \label{eq:msymp2a}
\end{equation}
where
\begin{equation}
L_0=\frac{1}{2}\rho u^2-\varepsilon(\rho,S)-\frac{B^2}{2\mu}, \label{eq:msymp2b}
\end{equation}
is the unconstrained Lagrangian.  

 The multi-symplectic Hamiltonian is given by the generalized Legendre transformation:
\begin{equation}
H(z)=\sum_s L^\alpha_s z^s_{,\alpha}-L=-L_0\equiv 
-\left(\frac{1}{2}\rho u^2-\varepsilon(\rho,S)-\frac{B^2}{2\mu}\right). \label{eq:msymp2c}
\end{equation}

The fundamental one forms $\omega^\alpha$
($\alpha=0,1,2,3$)
of the multi-symplectic system (\ref{eq:msymp1}) are defined as:
\begin{equation}
\omega^\alpha=L^\alpha_j(z)dz^j. \label{eq:msymp5}
\end{equation}
Taking the exterior derivative of the $\omega^\alpha$ gives the formulae
\begin{equation}
\kappa^\alpha=d\omega^\alpha=\frac{1}{2} {\sf K}^\alpha_{ij} dz^i\wedge dz^j, \label{eq:msymp5a}
\end{equation}
which, in turn, gives the formulae:
\begin{equation}
{\sf K}^\alpha_{ij}=\left(\deriv{L^\alpha_j}{z^i}-\deriv{L^\alpha_i}{z^j}\right), \label{eq:msymp4}
\end{equation}
for the skew symmetric matrices in (\ref{eq:msymp1}). 

For the variables $z^j$ in (\ref{eq:msymp2})
the one forms $\omega^\alpha$ are given by:
\begin{align}
\omega^0=&\phi d\rho+\beta dS+\lambda d\mu+\boldsymbol{\Gamma}{\bf\cdot}d{\bf B}, \nonumber\\
\omega^i=&u^i\left(\beta dS+\lambda d\mu+\phi d\rho\right)
+\rho \phi d u^i+\boldsymbol{\Gamma}{\bf\cdot B} du^i
-B^i(\boldsymbol{\Gamma}{\bf\cdot}d{\bf u})+u^i (\boldsymbol{\Gamma}{\bf\cdot}d{\bf B}), \label{eq:msymp7}
\end{align}
where $1\leq i\leq 3$. 

The pullback conservation laws:
\begin{equation}
G_\beta=D_\alpha\left(L^\alpha_j(z) z^j_{,\beta}-L\delta^\alpha_\beta\right)=0, \label{eq:msymp8}
\end {equation}
follow from the pullback of the identities:
\begin{align}
&\left(L^\alpha_j dz^j\right)_{,\alpha}=d\left\{L^\alpha_j(z)z^j_{,\alpha}-H(z)\right\}=dL, \nonumber\\
&\left(L^\alpha_j dz^j\right)_{,\alpha}=\left(L^\alpha_j z^j_{,\beta}dx^\beta\right)_{,\alpha}
=\left(L^\alpha_j z^j_{,\beta}\right)_{,\alpha}dx^\beta=dL=\deriv{L^\alpha}{x^\beta} dx^\beta. 
\label{eq:msymp9}
\end{align}
(see proposition 4.1 of \cite{Webb14c} for more detail).  

The pullback 
equation $\kappa^\alpha_{,\alpha}=0$ where $\kappa^\alpha=d\omega^\alpha$ gives rise to the symplecticity 
or phase-space conservation laws (structural conservation laws):
\begin{equation}
D_\alpha\left({\sf K}^\alpha_{ij} z^i_{,\beta}z^j_{,\gamma}\right)=0, \quad \beta<\gamma. \label{eq:msymp10}
\end{equation}
These conservation laws can also be written in the form:
\begin{equation}
D_\beta G_\gamma-D_\gamma G_\beta=D_\alpha\left({\sf K}^\alpha_{ij} z^i_{,\beta}z^j_{,\gamma}\right)=0, 
\label{eq:msymp11}
\end{equation}
i.e., the symplecticity conservation laws (\ref{eq:msymp11}) are compatibility conditions for the pullback 
conservation laws (\ref{eq:msymp8}).

The pullback conservation law (\ref{eq:msymp8}) for $\beta=0$ (\cite{Webb14c}, equation (5.39))
reduces to the energy conservation law:
\begin{equation}
G_0=-\left\{\derv{t}\left(\frac{1}{2} \rho u^2+\varepsilon (\rho,S)+\frac{B^2}{2\mu}\right) 
+\nabla{\bf\cdot}\left(\rho {\bf u}\left(\frac{1}{2} u^2+h\right)
+\frac{{\bf E}\times {\bf B}}{\mu}\right)
\right\}=0, \label{eq:msymp12}
\end{equation}
where $h=(\varepsilon+p)/\rho$ is the enthalpy of the gas, ${\bf E}=-{\bf u}\times{\bf B}$ 
is the electric field, 
and ${\bf E}\times{\bf B}/\mu$ is the Poynting flux. Similarly, the pullback conservation laws 
(\ref{eq:msymp8}) for $\beta=i$ ($1\leq i\leq 3$) give rise to the MHD momentum conservation equation:
\begin{equation}
G^i=-\left\{\derv{t}\left(\rho {\bf u}\right)+\nabla{\bf\cdot}
\left[\rho {\bf u}\otimes {\bf u}+
\left(p+\frac{B^2}{2\mu}\right){\bf I}-\frac{{\bf B}\otimes {\bf B}}{\mu}\right]\right\}^i=0, 
\label{eq:msymp13}
\end{equation}
(\cite{Webb14c}, equation (5.41)). 

\section{Extensions, comments and corrections}
The symplecticity conservation laws (\ref{eq:msymp10})-(\ref{eq:msymp11}) 
have a generalized curl form. Consider the symplecticity laws (\ref{eq:msymp11}) for $1\leq i,k\leq 3$, 
namely:
\begin{equation}
\Omega_{ik}=D_iG_k-D_kG_i=0. \label{eq:4.1}
\end{equation}
Introduce the dual of the tensor $\Omega_{ik}$ defined as:
\begin{equation}
V^p=-\frac{1}{2}\varepsilon_{pik}\Omega_{ik}=-(\nabla\times{\bf G})^p, \label{eq:4.2}
\end{equation}
where $\nabla\times{\bf G}$ is the spatial curl of ${\bf G}$. Taking into account the momentum conservation 
law (\ref{eq:msymp13}) for ${\bf G}$, (\ref{eq:4.2}) reduces to:
\begin{equation}
\nabla\times{\bf G}=-\left\{\derv{t}\nabla\times{\bf M}
+\nabla\times\left[\nabla{\bf\cdot}\left({\bf M}\otimes {\bf u}
-\frac{{\bf B}\otimes{\bf B}}{\mu}\right)\right]\right\}=0, \label{eq:4.3}
\end{equation}
where
\begin{equation}
{\bf M}=\rho {\bf u}, \label{eq:4.4}
\end{equation}
is the momentum density or mass flux ${\bf M}$ of the MHD fluid. Note there is no contribution 
from the magnetic pressure ($B^2/(2\mu)$) 
and gas pressure ($p$) gradient force terms in (\ref{eq:4.3}) because 
$\nabla\times\nabla(p+B^2/(2\mu))=0$ when one takes the curl of the momentum equation (\ref{eq:msymp13}). 
The evolution of $\nabla\times{\bf M}$ in (\ref{eq:4.3}) is thus determined by the inertia and 
magnetic tension components:
\begin{equation}
{\bf M}\otimes{\bf u}-\frac{{\bf B}\otimes{\bf B}}{\mu}, \label{eq:4.5}
\end{equation}
of the MHD stress-energy tensor. This suggests that (\ref{eq:4.3}) describes Alfv\'enic type disturbances, 
in which both fluid spin and magnetic tension forces are part of the dynamics. Equation (\ref{eq:4.3}) 
can also be expressed in the conservation law form:
\begin{equation}
\derv{t}\nabla\times{\bf M}
+\nabla_s\left[\nabla\times\left({\bf M}u^s-\frac{{\bf B}B^s}{\mu}\right)\right]=0.
\label{eq:4.6}
\end{equation}
Pressure gradient forces play no role in the vorticity-like conservation laws (\ref{eq:4.3}) 
and (\ref{eq:4.6}).

The symplecticity conservation law (\ref{eq:4.3}) is different than that obtained by taking the curl 
of Euler momentum equation in the form $d{\bf u}/dt={\bf F}$ where ${\bf F}$ is the net force on the fluid 
element, to obtain an equation for the evolution of the fluid vorticity 
$\boldsymbol{\omega}=\nabla\times{\bf u}$. \cite{Webb14a},\cite{Webb14b} and 
\cite{WebbAnco15}  
obtained a conservation law in ideal fluid mechanics (i.e. for ${\bf B}=0$) for the generalized vorticity 
\begin{equation}
\boldsymbol{\Omega}=\boldsymbol{\omega}+\nabla r\times\nabla S\quad\hbox{where}\quad \boldsymbol{\omega}=\nabla\times{\bf u}, \label{eq:4.7}
\end{equation}
is the fluid vorticity and $r$ satisfies the equation: 
\begin{equation}
\frac{dr}{dt}=\left(\derv{t}+{\bf u}{\bf\cdot}\nabla\right)r=-T. \label{eq:4.8}
\end{equation}
Here $r=\beta/\rho$ where $\beta$ is the Clebsch potential that ensures $dS/dt=0$ in the Eulerian, 
Clebsch variational approach (e.g. \cite{Zakharov97}) and 
$d/dt=\partial/\partial t+{\bf u}{\bf\cdot}\nabla$ is the Lagrangian time derivative following the flow. 
\cite{Webb14a},\cite{Webb14b} and \cite{WebbAnco15} show that for fluid dynamics (${\bf B}=0$) the  
modified vorticity flux $\boldsymbol{\Omega}{\bf\cdot}dS$ is advected or Lie dragged with the flow, i.e., 
\begin{equation}
\frac{d}{dt}\left(\boldsymbol{\Omega}{\bf\cdot}dS\right)=
\biggl[\deriv{\boldsymbol{\Omega}}{t}
-\nabla\times\left({\bf u}\times\boldsymbol{\Omega}\right)
+{\bf u}(\nabla{\bf\cdot}\boldsymbol{\Omega})
\biggr]{\bf\cdot}dS=0. \label{eq:4.9}
\end{equation}
The conservation law (\ref{eq:4.9}) and the associated 
conservation law for the modified fluid helicity $\bf{u}{\bf\cdot}\boldsymbol{\Omega}$  
are nonlocal conservation laws that depend on the nonlocal variable $r=-\int_0^t T({\bf x},t)dt$ 
where the integration is with respect to the Lagrangian time $t$ (e.g. \cite{Webb14a, Webb14b}). 
Conservation law (\ref{eq:4.9})
in fluid dynamics is analogous to Faraday's equation in MHD. 
Note that $\nabla{\bf\cdot}\boldsymbol{\Omega}=0$ 
in (\ref{eq:4.9}).  

There is another symplecticity conservation law obtained from (\ref{eq:msymp11}) for the case $\beta=0$ 
and $\gamma=1,2,3$. In that case (\ref{eq:msymp11}) reduces to:
\begin{equation}
\derv{t}{\bf G}-\nabla G_0=0, \label{eq:4.10}
\end{equation}
where ${\bf G}=0$ is the momentum equation (\ref{eq:msymp13}) and $G_0=0$ is the energy conservation 
equation (\ref{eq:msymp12}). 

The general form of the symplecticity equations for MHD using Eulerian Clebsch potentials 
were given in \cite{Webb14c} [equations (5.44) et seq. of that paper]. There were some typographical 
errors in the flux $F^k_{ab}$, indicated below. The general symplecticity conservation laws 
obtained by \cite{Webb14c} have the form:
\begin{equation}
\derv{t}\left(F^0_{ab}\right)+\derv{x^k}\left(F^k_{ab}\right)=0, \label{eq:4.11}
\end{equation}
where
\begin{equation}
F^0_{ab}=\frac{\partial(\phi,\rho)}{\partial(x^a,x^b)}
+ \frac{\partial(\beta,S)}{\partial(x^a,x^b)}
+\frac{\partial(\lambda,\mu)}{\partial(x^a,x^b)}
+\frac{\partial(\Gamma_s,B^s)}{\partial(x^a,x^b)}, \label{eq:4.12}
\end{equation}
and
\begin{align}
F^k_{ab}=&-\frac{\partial(\rho u^k,\phi)}{\partial(x^a,x^b)}+\frac{\partial(\beta u^k,S)}{\partial(x^a,x^b)}
+\frac{\partial(\lambda u^k,\mu)}{\partial(x^a,x^b)}\nonumber\\
&+\frac{\partial(\Gamma_s B^s,u^k)}{\partial(x^a,x^b)}+\frac{\partial(\Gamma_s u^k,B^s)}{\partial(x^a,x^b)}
-\frac{\partial(\Gamma_s B^k,u^s)}{\partial(x^a,x^b)}. \label{eq:4.13}
\end{align}
In (\ref{eq:4.13}) $1\leq k\leq 3$. 
The fourth term on the right handside of (\ref{eq:4.13})
was missed in equation (5.48) in \cite{Webb14c}. Also in (\ref{eq:4.13}) we used the identity:
\begin{equation}
\frac{\partial(\phi u^k,\rho)}{\partial(x^a,x^b)}+\frac{\partial(\rho \phi,u^k)}{\partial(x^a,x^b)}
=-\frac{\partial(\rho u^k,\phi)}{\partial(x^a,x^b)}, \label{eq:4.14}
\end{equation}
to simplify (5.48) of \cite{Webb14c}.  The derivation of the symplecticity 
conservation laws (\ref{eq:4.3}) and (\ref{eq:4.10}) using the general symplecticity laws (\ref{eq:4.11})
and using (\ref{eq:Clebsch3})-(\ref{eq:Clebsch9}) is a non-trivial algebraic exercise.  

\subsection{Differential forms approach}
Proposition 4.3 in \cite{Webb14c} contains flaws. A consistent approach to the multi-symplectic 
equations using differential forms for 1D Lagrangian gas dynamics was given by \cite{Webb15}. 
\cite{WebbAnco15} 
 have given the corresponding theory for multi-dimensional, ideal, compressible,  Lagrangian gas dynamics. 
Below we use  
differential forms to describe the Eulerian, Clebsch variable MHD variational principle of 
\cite{Webb14c}.   

\begin{proposition}
The multi-symplectic system (\ref{eq:msymp1}) is a stationary point of the action:
\begin{equation}
J=\int\psi^* (\Theta)=\int L dV, \label{eq:4.15}
\end{equation}
where $\psi^*(\Theta)$ is the pullback of the differential form $\Theta$ given below, namely:
\begin{align}
\Theta=&\omega^\alpha\wedge d\tilde{x}^\alpha-H dV,\quad \omega^\alpha=L^\alpha_j dz^j, \nonumber\\
dV=&dt\wedge dx\wedge dy\wedge dz,\quad d\tilde{x}^\alpha=\partial_\alpha\lrcorner dV
\equiv (-1)^\alpha dx_0\wedge\ldots dx^{\alpha-1}\wedge dx^{\alpha+1}\ldots \wedge dx^n, \label{eq:4.16}
\end{align}
where we use the notation $(x^0,x^1,x^2,x^3)=(t,x,y,z)$, and $L$ is the constrained Lagrangian 
(\ref{eq:Clebsch2}).
\end{proposition}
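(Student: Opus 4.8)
The plan is to prove the proposition in two stages: first to verify the stated identity $\psi^*(\Theta)=L\,dV$ in (\ref{eq:4.15}), and then to show that the Euler--Lagrange equations of the resulting action reproduce the multi-symplectic system (\ref{eq:msymp1}).

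For the first stage I would compute the pullback of $\Theta$ under the section $\psi:(x^0,x^1,x^2,x^3)\mapsto(x^\alpha,z^j(x))$, for which $\psi^*dz^j=z^j_{,\beta}\,dx^\beta$ and $\psi^*H=H(z(x))$. The only nonmechanical ingredient is the contraction identity
\[
dx^\beta\wedge d\tilde{x}^\alpha=dx^\beta\wedge(\partial_\alpha\lrcorner\,dV)=\delta^\beta_\alpha\,dV,
\]
which is precisely what the sign factor $(-1)^\alpha$ in the definition (\ref{eq:4.16}) of $d\tilde{x}^\alpha$ is arranged to produce. Applying it gives $\psi^*(\omega^\alpha\wedge d\tilde{x}^\alpha)=L^\alpha_j z^j_{,\alpha}\,dV$, so that
\[
\psi^*(\Theta)=\Bigl(\sum_s L^\alpha_s z^s_{,\alpha}-H\Bigr)\,dV=L\,dV,
\]
the last equality being the generalized Legendre transformation (\ref{eq:msymp2c}) together with the decomposition (\ref{eq:msymp2a})--(\ref{eq:msymp2b}). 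This confirms (\ref{eq:4.15}).

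For the second stage I would take arbitrary compactly supported variations $\delta z^i$ and compute $\delta J$ from $J=\int L\,dV$. Because the coefficients $L^\alpha_s$ depend on $z$ alone, one has $\partial L/\partial z^i_{,\alpha}=L^\alpha_i$ and $\partial L/\partial z^i=\partial L_0/\partial z^i+(\partial L^\alpha_s/\partial z^i)z^s_{,\alpha}$. Carrying out $D_\alpha L^\alpha_i=(\partial L^\alpha_i/\partial z^s)z^s_{,\alpha}$ by the chain rule and relabelling, the Euler--Lagrange equations collapse to
\[
\frac{\partial L_0}{\partial z^i}+\left(\frac{\partial L^\alpha_s}{\partial z^i}-\frac{\partial L^\alpha_i}{\partial z^s}\right)z^s_{,\alpha}=0.
\]
The antisymmetric combination in brackets is exactly ${\sf K}^\alpha_{is}$ of (\ref{eq:msymp4}), and since $H=-L_0$ we have $\partial L_0/\partial z^i=-\partial H/\partial z^i$; the equations therefore become ${\sf K}^\alpha_{ij}z^j_{,\alpha}=\partial H/\partial z^i$, which is (\ref{eq:msymp1}).

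The main obstacle I anticipate is not the variational calculus, which is routine, but getting the contraction and sign conventions exactly right in the first stage: the interplay of the $(-1)^\alpha$ factor in $d\tilde{x}^\alpha$, the wedge ordering, and the orientation of $dV$ must all be consistent for the pullback to reduce cleanly to $+L\,dV$ rather than a sign- or permutation-corrupted expression. Because this proposition replaces a flawed earlier version, I would take particular care with the Legendre-transform bookkeeping and with the emergence of the skew-symmetric ${\sf K}^\alpha_{ij}$, checking the outcome against the explicit one-forms $\omega^0,\omega^i$ in (\ref{eq:msymp7}) as a consistency test.
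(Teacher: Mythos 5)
Your proposal is correct and takes essentially the same route as the paper: pull back $\Theta$ using the contraction identity $dx^s\wedge d\tilde{x}_\alpha=\delta^s_\alpha\,dV$ to obtain $J=\int L\,dV$, then derive the Euler--Lagrange equations and identify the skew-symmetric combination $\partial L^\alpha_j/\partial z^i-\partial L^\alpha_i/\partial z^j$ with ${\sf K}^\alpha_{ij}$ so that the equations read ${\sf K}^\alpha_{ij}z^j_{,\alpha}=\partial H/\partial z^i$. The only difference is one of detail: you spell out the chain-rule bookkeeping ($\partial L/\partial z^i_{,\alpha}=L^\alpha_i$, $D_\alpha L^\alpha_i=(\partial L^\alpha_i/\partial z^s)z^s_{,\alpha}$, $H=-L_0$) that the paper compresses into a single displayed equation with a citation to \cite{Hydon05}.
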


\begin{proof}
The pullback of the form $\Theta$ is given by:
\begin{align}
\psi^*(\Theta)=&\psi^*\left(L^\alpha_j dz^j\wedge d\tilde{x}_\alpha-H dV\right)\nonumber\\
=&L^\alpha_j \deriv{z^j}{x^s} dx^s\wedge d\tilde{x}^\alpha-HdV. \label{eq:4.17}
\end{align}
However, 
\begin{equation}
dx^s\wedge d\tilde{x}_\alpha
=dx^s\wedge (-1)^\alpha dx^0\ldots \wedge dx^{\alpha-1}\wedge dx^{\alpha+1} \ldots\wedge dx^n
\equiv (-1)^{2\alpha} \delta^s_\alpha dV. \label{eq:4.18}
\end{equation}
Thus,
\begin{equation}
\psi^*(\Theta)=\left(L^\alpha_j \deriv{z^j}{x^\alpha}-H\right) dV\equiv  L dV, \label{eq:4.19}
\end{equation}
where $L$ is the multi-symplectic Lagrangian (\ref{eq:msymp2}). 

The stationary point conditions, $\delta J/\delta z^i=0$, give the Euler-Lagrange equations:
\begin{equation}
\frac{\delta J}{\delta z^i}=\deriv{L}{z^i}
-\derv{x^j}\left(\deriv{L}{z^i_{,j}}\right)
={\sf K}^\alpha_{ij}
\deriv{z^j}{x^\alpha}-\deriv{H}{z^i}=0, \label{eq:4.20}
\end{equation}
which is the multi-symplectic system (\ref{eq:msymp1}) (see also \cite{Hydon05}).
\end{proof}

\begin{proposition}
Consider the variational functional:
\begin{equation}
G[\Omega]=\int_M \Omega, \label{eq:4.21}
\end{equation} 
where
\begin{equation}
\Omega=d\Theta=d\omega^\alpha\wedge d\tilde{x}_\alpha-dH\wedge dV. \label{eq:4.22}
\end{equation}
and $M$ is a region of the jet space (fiber bundle space) with boundary $\partial M$, in which 
the $z^s$ are taken as independent of the base variables $x^\alpha$ ($\alpha=0,1,2,3$). 
Consider the variational principle:
\begin{equation}
\delta G[\Omega]=\int_M {\cal{L}}_{\bf V}\left(\Omega\right)=0, \label{eq:4.23}
\end{equation}
where
\begin{equation}
{\cal{L}}_{\bf V}=\frac{d}{d\epsilon}=V^i\derv{z^i}, \label{eq:4.24}
\end{equation}
is the Lie derivative with respect to the arbitrary, but smooth vector field ${\bf V}$. 
The variational equation  $\delta G[\Omega]=0$ reduces to:
\begin{equation}
\delta G[\Omega]=\int_{\partial M} V^p\beta_p=0, \label{eq:4.25}
\end{equation}
where the forms $\beta_p$ are given by the formulae:
\begin{equation}
\beta_p=\derv{z^p}\lrcorner\Omega
={\sf K}^\alpha_{pj} dz^j\wedge d\tilde{x}_\alpha-\deriv{H}{z^p} dV. \label{eq:4.26}
\end{equation}
($1\leq p\leq N$). Because the $V^p$ are arbitrary smooth functions of the $z^s$, the variational principle 
$\delta G[\Omega]=0$ implies:
\begin{equation}
\beta_p=0,\quad 1\leq p\leq N. \label{eq:4.27}
\end{equation}
The pullback of the forms $\{\beta_p\}$ to the base manifold gives the equations:
\begin{equation}
\tilde{\beta}_p=\left({\sf K}^\alpha_{pj} \deriv{z^j}{x^\alpha}-\deriv{H}{z^p}\right) dV=0. \label{eq:4.28}
\end{equation}
Thus, the sectioned forms $\tilde{\beta}_p$ vanish on the solution manifold of the multi-symplectic system 
(\ref{eq:msymp1}), and the $\{\beta_p\}$ can be used as a basis of Cartan forms describing 
the system (\ref{eq:msymp1}).
\end{proposition}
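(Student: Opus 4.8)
The plan is to exploit the exactness of $\Omega$ together with Cartan's homotopy (``magic'') formula for the Lie derivative. Since $\Omega=d\Theta$ by (\ref{eq:4.22}), closedness is automatic, $d\Omega=d^2\Theta=0$, and Cartan's formula collapses to an exact form,
\[
{\cal L}_{\bf V}\Omega=d\left({\bf V}\lrcorner\,\Omega\right)+{\bf V}\lrcorner\,(d\Omega)=d\left({\bf V}\lrcorner\,\Omega\right),
\]
so the integrand of (\ref{eq:4.23}) is a total derivative. Applying Stokes' theorem on $M$ converts the variation into a pure boundary term,
\[
\delta G[\Omega]=\int_M d\left({\bf V}\lrcorner\,\Omega\right)=\int_{\partial M}{\bf V}\lrcorner\,\Omega,
\]
and, since ${\bf V}=V^p\,\partial/\partial z^p$ and the interior product is $C^\infty$-linear in its vector slot, ${\bf V}\lrcorner\,\Omega=V^p\left(\partial/\partial z^p\lrcorner\,\Omega\right)=V^p\beta_p$. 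This is exactly (\ref{eq:4.25}).

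Next I would verify the explicit expression (\ref{eq:4.26}) for $\beta_p=\partial/\partial z^p\lrcorner\,\Omega$. Writing $d\omega^\alpha=\kappa^\alpha=\tfrac12{\sf K}^\alpha_{ij}\,dz^i\wedge dz^j$ from (\ref{eq:msymp5a}) and $dH=(\partial H/\partial z^j)\,dz^j$, and noting that $\partial/\partial z^p$ annihilates every $d\tilde{x}_\alpha$ and $dV$ (these are built only from the base one-forms $dx^\beta$), the contraction acts on the $dz$ factors alone. Using $\partial/\partial z^p\lrcorner\,dz^j=\delta^j_p$ and the anti-derivation rule for $\lrcorner$, the contraction of the two-form part produces two terms that, by the skew symmetry ${\sf K}^\alpha_{ij}=-{\sf K}^\alpha_{ji}$, combine with the prefactor $\tfrac12$ to give precisely ${\sf K}^\alpha_{pj}\,dz^j\wedge d\tilde{x}_\alpha$, while contraction of $dH\wedge dV$ returns $(\partial H/\partial z^p)\,dV$. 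This reproduces (\ref{eq:4.26}).

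From the boundary identity (\ref{eq:4.25}) the pointwise relations (\ref{eq:4.27}) follow by the fundamental lemma of the calculus of variations adapted to this geometric setting: as the $V^p$ are arbitrary smooth functions and the region $M$ (hence $\partial M$) may itself be varied, vanishing of $\int_{\partial M}V^p\beta_p$ for all such data forces each generator $\beta_p$ to vanish on the extremal section. Finally, sectioning the forms---replacing $dz^j$ by $z^j_{,s}\,dx^s$ and invoking the identity $dx^s\wedge d\tilde{x}_\alpha=\delta^s_\alpha\,dV$ from (\ref{eq:4.18})---turns $\beta_p$ into $\tilde{\beta}_p=\left({\sf K}^\alpha_{pj}z^j_{,\alpha}-\partial H/\partial z^p\right)dV$, which is (\ref{eq:4.28}); setting $\tilde{\beta}_p=0$ is precisely the multi-symplectic system (\ref{eq:msymp1}), so $\{\beta_p\}$ is the sought basis of Cartan forms.

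The main obstacle is the sign-and-factor bookkeeping in contracting the two-form part $\tfrac12{\sf K}^\alpha_{ij}\,dz^i\wedge dz^j\wedge d\tilde{x}_\alpha$: one must track the anti-derivation signs and confirm that the two skew-symmetric contributions exactly cancel the $\tfrac12$, leaving no stray factor in $\beta_p$. A secondary delicacy is the fundamental-lemma step itself, since the $\beta_p$ are top-degree forms restricted to $\partial M$ rather than scalar integrands; one must argue that the joint arbitrariness of ${\bf V}$ and of $M$ localizes the condition to the pointwise vanishing asserted in (\ref{eq:4.27}).
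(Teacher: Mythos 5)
Your proposal is correct and follows essentially the same route as the paper's proof: Cartan's magic formula with $d\Omega=d^2\Theta=0$, Stokes' theorem to reduce the variation to the boundary term $\int_{\partial M}V^p\beta_p$, and the contraction of $\tfrac12{\sf K}^\alpha_{ij}\,dz^i\wedge dz^j\wedge d\tilde{x}_\alpha$ using skew symmetry to cancel the factor $\tfrac12$ and obtain (\ref{eq:4.26}). Your extra care with the fundamental-lemma localization and the sectioning step $dz^j\mapsto z^j_{,s}dx^s$ only makes explicit what the paper leaves implicit.
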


\begin{proof}
The proof is essentially the same as that given by \cite{Webb15} for the case of 1D gas dynamics. 
A critical component of the proof is the use of Cartan's magic formula:
\begin{equation}
{\cal{L}}_{\bf V}\left(\Omega\right)={\bf V}\lrcorner d\Omega+d \left({\bf V}\lrcorner \Omega\right)
=d \left({\bf V}\lrcorner \Omega\right), \label{eq:4.29}
\end{equation}
where we used the facts $\Omega=d\Theta$ and $d\Omega=dd\Theta=0$. 
Using (\ref{eq:4.29}) and Stokes theorem, (\ref{eq:4.23})  reduces to: 
\begin{equation}
\delta G[\Omega]=\int_M d({\bf V}\lrcorner\Omega)=\int_{\partial M} {\bf V}\lrcorner\Omega
=\int_{\partial M} V^p\left(\derv{z^p}\lrcorner\Omega\right)=\int_{\partial M} V^p\beta_p=0, 
\label{eq:4.30}
\end{equation}
which verifies (\ref{eq:4.25}). The formula (\ref{eq:4.26}) for $\beta_p$ is obtained by using 
(\ref{eq:msymp5a}) for $d\omega^\alpha$ and  (\ref{eq:4.22}) for $\Omega$, to obtain:
\begin{equation}
\beta_p=\derv{z^p}\lrcorner\Omega=\derv{z^p}\lrcorner\left(\frac{1}{2} 
{\sf K}^\alpha_{ij} dz^i\wedge dz^j\wedge d\tilde{x}_\alpha -\deriv{H}{z^a} dz^a\wedge dV\right), 
\label{eq:4.31}
\end{equation}
Using the skew symmetry of ${\sf K}^\alpha_{ij}$ and $dz^i\wedge dz^j$, (\ref{eq:4.31}) reduces 
to the expression (\ref{eq:4.26}) for $\beta_p$. This completes the proof.
\end{proof}

\subsection{The differential forms $\beta_p$}
The differential forms $\beta_p=\partial_{z^p}\lrcorner\Omega$ in (\ref{eq:4.26}) may be used to 
represent the MHD system described by the Clebsch variable variational principle. The dependent variables
${\bf z}$ are listed in (\ref{eq:msymp2}). In the time evolution variational principle 
(e.g. \cite{Zakharov97}, the fluid velocity ${\bf u}$ is expressed in terms of 
the Clebsch potentials, and is eliminated from the Hamiltonian density 
$H=(1/2)\rho u^2 +\varepsilon(\rho,S)+B^2/(2\mu)$ 
and $(\rho,\phi)$, $(S,\beta)$, $(\mu,\lambda)$, $({\bf B},\boldsymbol{\Gamma})$ are 
canonically conjugate pairs in the canonical Poisson bracket. We use the notation:
\begin{equation}
\beta^{z^i}=\partial_{z^i}\lrcorner\Omega, \label{eq:4.32}
\end{equation}
where the Cartan Poincar\'e form $\Omega$ in (\ref{eq:4.22}) has the form:
\begin{equation}
\Omega=d\omega^0\wedge d\tilde{x}_0+d\omega^k\wedge d\tilde{x}_k-\deriv{H}{z^p}dz^p\wedge dV, \label{eq:4.33}
\end{equation}
where the differential forms $\omega^0$ and $\omega^k$ are listed in (\ref{eq:msymp7}). 
From (\ref{eq:4.32}) and (\ref{eq:4.33}) 
\begin{equation}
\beta^{z^i}=\partial_{z^i}\lrcorner\left\{d\omega^0\wedge d\tilde{x}_0+d\omega^k\wedge d\tilde{x}_k
-\deriv{H}{z^p} dz^p\wedge dV\right\}. \label{eq:4.34}
\end{equation}

Using (\ref{eq:msymp7}) and (\ref{eq:4.34}) we obtain:
\begin{equation}
\beta^{u^i}
=\left(\beta dS+\lambda d\mu-\rho d\phi- B^s d\Gamma_s\right)\wedge d\tilde{x}_i
+\left(\Gamma_id B^k+B^k d\Gamma_i\right)\wedge d\tilde{x}_k+\rho u^i dV, \label{eq:4.35}
\end{equation}
for the differential forms associated with ${\bf u}$. Using the identity
\begin{equation}
dx^a\wedge d\tilde{x}_i=\delta^a_i dV, \label{eq:4.36}
\end{equation}
the sectioned form equation $\tilde{\beta}^{u^i}=0$ yields the expression:
\begin{equation}
\rho {\bf u}=\rho\nabla\phi-\beta\nabla S-\lambda\nabla\mu
+{\bf B}{\bf\cdot}(\nabla\boldsymbol{\Gamma})^T
-{\bf B}{\bf\cdot}\nabla\boldsymbol{\Gamma}-\boldsymbol{\Gamma}\nabla{\bf\cdot}{\bf B}, \label{eq:4.37}
\end{equation}
which is equivalent to the Clebsch expansion for the mass flux $\rho {\bf u}$ given in (\ref{eq:Clebsch3}).

The differential form $\beta^\rho$ is given by:
\begin{equation}
\beta^\rho=\partial_\rho\lrcorner \Omega=(\partial_\rho \lrcorner d\omega^0)\wedge d\tilde{x}_0 
+(\partial_\rho\lrcorner d\omega^k)\wedge d\tilde{x}^k-\deriv{H}{\rho} dV. \label{eq:4.38}
\end{equation}
Using (\ref{eq:msymp5}) for $H$, we obtain:
\begin{equation}
\deriv{H}{\rho}=-\left(\frac{1}{2} u^2-\varepsilon_\rho\right)=-\left(\frac{1}{2} u^2-h\right), 
\label{eq:4.39}
\end{equation}
where $h=(\varepsilon+p)/\rho$ is the gas enthalpy. Substituting (\ref{eq:4.39}) in (\ref{eq:4.38}) 
gives
\begin{align}
\beta^\rho=&-\left(d\phi\wedge d\tilde{x}_0+u^k d\phi\wedge d\tilde{x}_k\right)+\left(\frac{1}{2} u^2-h\right) dV, \label{eq:4.40}\\
\tilde{\beta}^\rho=&-\left[\frac{d\phi}{dt}-\left(\frac{1}{2} u^2-h\right)\right] dV, \label{eq:4.41}
\end{align}
for the differential form $\beta^\rho$ and for the sectioned form $\tilde{\beta}^\rho$. Note that 
$\tilde{\beta}^\rho=0$ is equivalent to Bernoulli's equation (\ref{eq:Clebsch6}). 
Similarly, we obtain:
\begin{align}
\beta^S=&\partial_S\lrcorner \Omega=-d\beta\wedge d\tilde{x}_0
-d(\beta u^k)\wedge d\tilde{x}_k-\rho T dV, \label{eq:4.42}\\
\tilde{\beta}^S=&-\left(\deriv{\beta}{t}+\nabla{\bf\cdot}(\beta {\bf u})+\rho T\right) dV. \label{eq:4.43}
\end{align}
The equation $\tilde{\beta}^S=0$ corresponds to (\ref{eq:Clebsch7}) for $\beta$.

Following the above procedure we obtain the equations:
\begin{align}
\beta^\mu=&\partial_\mu\lrcorner \Omega=
-\left[d\lambda\wedge d\tilde{x}_0+d(\lambda u^k)\wedge d\tilde{x}_k\right], \nonumber\\
\beta^{B^i}=&-\left[d\Gamma_i\wedge d\tilde{x}_0+u^k d\Gamma_i\wedge d\tilde{x}_k 
+\Gamma_s du^s\wedge d\tilde{x}_i +\frac{B^i}{\mu}dV \right], \nonumber\\
\beta^{\Gamma_i}=&dB^i\wedge d\tilde{x}_0+ \left(B^i du^k+ u^k dB^i-B^k du^i\right)\wedge d\tilde{x}_k, 
\nonumber\\
\beta^\lambda=&\partial_\lambda\lrcorner\Omega
=\left(d\mu\wedge d\tilde{x}_0+u^k d\mu\wedge d\tilde{x}_k\right), \nonumber\\
\beta^\beta=&\partial_\beta\lrcorner\Omega=dS\wedge d\tilde{x}_0+u^k dS \wedge d\tilde{x}_k,
\nonumber\\
\beta^\phi=&\partial_\phi\lrcorner \Omega
=d\rho\wedge d\tilde{x}_0
+\left(u^k d\rho+\rho du^k\right)\wedge d\tilde{x}_k.  \label{eq:4.44}
\end{align}
The pullback of the above equations, i.e. $\tilde{\beta}^{z^p}=0$, gives the evolution equations 
for $(\lambda,\boldsymbol{\Gamma}, {\bf B}, \mu, S, \rho)$ listed in (\ref{eq:Clebsch5})-(\ref{eq:Clebsch9}). 
The differential form equations $\beta^{z^p}=0$ thus represent the partial differential 
equation system (\ref{eq:msymp1}). 

It is not obvious that the system of forms $\{\beta^{z^p}\}$ above is a closed ideal. A  
check on the closure of the forms for the case of non-barotropic, 
1D gas dynamics (i.e. ${\bf B}=0$) indicates that the 
ideal of forms 
${\cal{I}}=\{\beta^u,\beta^\rho,\beta^S,\beta^\beta,\beta^\phi\}$  
can be closed by adjoining 
the form $d\beta^u$. The ideal $\cal{I}$ is closed for the case of a barotropic gas.  
 The Cartan approach to Lie symmetries 
requires that ${\cal I}$ is a closed ideal (e.g. Harrison and Estabrook (1971)). 
The ideal is closed  if $d\beta_i=c_{ij}\wedge \beta^{z^j}$, where the $c_{ij}$ are forms. 
It should be noted that the ideal of forms obtained by \cite{Webb15} for 1D, Lagrangian, multi-symplectic 
gas dynamics is closed. The ideal of forms for multi-dimensional, Lagrangian. compressible gas dynamics
obtained from the Cartan-Poincar\'e form is also a closed ideal (Webb and Anco (2015)). 
This suggests that the ideal of forms using the Clebsch variable description has a more complicated 
structure than the set of forms that arise in the Lagrangian variational approach.
In general,  this question requires more work 
and lies beyond the scope of the present paper. 

\section{Summary and Concluding Remarks}
In this paper, we have corrected some flaws in the paper by \cite{Webb14c} on multi-symplectic 
MHD. This includes the correction of some typographic errors in the expressions for the 
conserved fluxes in the symplecticity conservation laws written in terms of the Clebsch 
potentials in (\ref{eq:4.11})-(\ref{eq:4.13}) and in particular expression (\ref{eq:4.13}) 
for $F^k_{ab}$ has been corrected. We pointed out that the vorticity-type symplecticity 
conservation law is equivalent to taking the curl of the MHD momentum equation in the form 
(\ref{eq:4.6}).  This conservation law has conserved density $\nabla\times{\bf M}$
where ${\bf M}=\rho {\bf u}$ is the momentum density of the MHD fluid. Pressure gradient 
forces due to the combined gas and magnetic pressures (i.e. $\nabla(p+p_B)$, where $p_B=B^2/(2\mu)$ 
is the magnetic pressure and $p$ is the gas pressure), play no role in this conservation law. 
The conservation law (\ref{eq:4.6}) describes fluid motions with spin and rotation 
(e.g; as in an Alfv\'en wave). This symplecticity conservation law is different than that obtained 
by \cite{WebbAnco15} in ideal Lagrangian fluid mechanics. 
In the latter paper the vorticity 
symplecticity conservation law involved 
$\boldsymbol{\Omega}=\boldsymbol{\omega}+\nabla r\times\nabla S$ 
where $\boldsymbol{\omega}=\nabla\times{\bf u}$ is the fluid vorticity and $r=\beta/\rho$ where 
$\beta$ is the Clebsch potential that enforces entropy conservation following the flow. This 
 is a nonlocal conservation law as it involves the nonlocal variable $r$. 
It has the same form as Faraday's equation in MHD except that 
${\bf B}$ is replaced by $\boldsymbol{\Omega}$.  
A further symplecticity conservation law is due to a compatibility condition 
which requires that time derivative of the 
 the momentum conservation equation minus the gradient of the energy conservation equation 
is zero. We provided a more consistent treatment of variational principles 
associated with the Cartan-Poincar\'e form than that given in \cite{Webb14c}, and its relation 
to Cartan's geometric theory of partial differential equations using differential forms (Sections 
4.1 and 4.2). \cite{Cendra13} suggest that the formulation of multi-symplectic 
systems using exterior differential forms, is a natural approach in 
taking into account integrability conditions in the theory. 

\section*{Acknowledgements}
GMW acknowledges discussions with Darryl Holm and Phil. Morrison on multi-symplectic MHD.
GMW is supported in part by NASA grant NNX15A165G. GPZ was supported
in part by NASA grants NN05GG83G and NSF grant nos. ATM-03-17509 and ATM-04-28880.


\begin{thebibliography}{}






\bibitem[{\it Bridges}(1992)]{Bridges92}
Bridges, T. J. 1992, Spatial Hamiltonian structure, energy flux and
the water-wave problem.
{\it Proc. Roy. Soc. London},{\bf 439}, 297-315.

\bibitem[{\it Bridges}(1997a)]{Bridges97a}
 Bridges, T.J.,  1997a, Multi-symplectic structures and wave propagation, {\it 
Math. Proc. Camb. Philos. Soc.}, {\bf 121}, 147-190.

\bibitem[{\it Bridges}(1997b)]{Bridges97b} 
Bridges, T. J., 1997b,
A geometric formulation of the conservation of wave action and its 
implications for signature and classification of instabilities, {\it 
Proc. Roy. Soc.} A, {\bf 453}, 1365-1395 (1997b).

\bibitem[{\it Bridges et al.} (2005)]{Bridges05} 
 Bridges, T. J.   Hydon, P. E. and 
 Reich, S. 2005, Vorticity and
symplecticity in Lagrangian fluid dynamics, \emph{J. Phys. A: Math. Gen.} 
\textbf{38} 1403-1418.

\bibitem[{\it Bridges and Reich}(2006)]{BridgesReich06}
Bridges, T. J. and Reich, S. 2006, Numerical methods for Hamiltonian PDEs, 
{\it J. Phys. A, Math. Gen.}, {\bf 39}, 5287-5320. 

\bibitem[{\it Bridges}(2006)]{Bridges06}
Bridges, T. J. 2006, Canonical multi-symplectic structure on 
the total exterior algebra bundle, {\it Proc. Roy. Soc. London, A}, {\bf 462}, 
1531-1551.

\bibitem[{\em Bridges et al.}(2010)]{Bridges10}
Bridges, T. J., Hydon, P. E. and Lawson, J.K. 2010, Multi-symplectic structures and the 
variational bi-complex, {\it Math. Proc. Cambridge Phil. Soc.}, 
issue 1, (Jan. 2010), pp 159-178.



\bibitem[{\it Chandre et al.}(2013)]{Chandre13}
Chandre, C., de Guillebon, L., Back, A., Tassi, E. and Morrison, P. J. 2013,
On the use of projectors for Hamiltonian systems and their relationship
with Dirac brackets, {\it J. Phys. A, Math. and theoret.}, {\bf 46},
125203 (14pp), doi:10.10.1088/1751-8133/46/12/125203.

\bibitem[{\it Cendra and Caprioti}(2013)]{Cendra13}
Cendra, H, and Caprioti, S. 2013, Cartan algorithm and Dirac constraints for Griffiths variational problems, 
 available at http://arxiv.org/abs/1309.4080v1.
 
\bibitem[{\it Cotter et al. }(2007)]{Cotter07}
Cotter, C. J., Holm, D. D., and Hydon, P. E., 2007, Multi-symplectic 
formulation of fluid dynamics using the inverse map, 
{\it Proc. Roy. Soc. Lond.} A, {\bf 463}, 2617-2687 (2007).




























 
\bibitem[{\it Holm and Kupershmidt}(1983a)]{Holm83a}
Holm, D. D. and Kupershmidt, B. A. 1983a, Poisson brackets and Clebsch 
representations for magnetohydrodynamics, multi-fluid plasmas and 
elasticity, {\it Physica D}, {\bf 6D}, 347-363.

\bibitem[{\it Holm and Kupershmidt}(1983b)]{Holm83b}
Holm, D. D. and Kupershmidt, B. A. 1983b, noncanonical Hamiltonian formulation of ideal magnetohydrodynamics, {\it Physica D}, {\bf 7D}, 330-333.

\bibitem[{\it Holm et al.}(1998)]{Holm98}
Holm, D.D., Marsden, J.E. and Ratiu, T.S. 1998, The Euler-Lagrange equations and semi-products with 
application to continuum theories, {\it Adv. Math.},
{\bf 137}, 1-81.



\bibitem[{\it Hydon} (2005)]{Hydon05} 
Hydon, P. E., 2005, Multi-symplectic conservation laws for differential 
and differential-difference equations,
\emph{Proc. Roy. Soc. A}, \textbf{461}, 1627-1637 (2005).












\bibitem[{\it Marsden and Shkoller}(1999)]{Marsden99}
Marsden, J. E. and Shkoller, S. 1999, Multi-symplectic geometry, covariant
Hamiltonians and Water Waves, {\it Math. Proc. Camb. Phil. Soc.}, 
{\bf 125}, 553-575.







\bibitem[{\it Morrison}(1982)]{Morrison82}
Morrison, P. J. 1982, Poisson brackets for fluids and plasmas, in Mathematical 
Methods in Hydrodynamics and Integrability of dynamical Systems,
 {\it AIP Proc. Conf.}, {\bf 88}, ed M. Tabor and Y. M. Treve, pp 13-46.

\bibitem[{\it Morrison and Greene}(1980)]{Morrison80}
Morrison, P.J. and Greene, J.M. 1980, Noncanonical Hamiltonian density
formulation of hydrodynamics and ideal magnetohydrodynamics,
 {\it Phys. Rev. Lett.}, {\bf 45}, 790-794.

\bibitem[{\it Morrison and Greene}(1982)]{Morrison82a}
Morrison, P.J. and Greene, J.M. 1982, Noncanonical Hamiltonian density
formulation of hydrodynamics and ideal magnetohydrodynamics, (Errata),
 {\it Phys. Rev. Lett.}, {\bf 48}, 569.






























\bibitem[{\it Webb}(2015)]{Webb15}
Webb, G. M. 2015, Multi-symplectic, Lagrangian, one-dimensional gas dynamics,
{\it J. Math. Phys.}, {\bf 56}, 053101 (20pp.),
also available at http://arxiv.org/abs/1408.4028v4

\bibitem[{\it Webb et al.}(2007)]{Webb07b}
Webb, G. M.; McKenzie, J. F.; Mace, R. L.; Ko, C. M.; Zank, G. P. 2007,
Dual variational principles for nonlinear traveling waves in multifluid plasmas,
{\it Phys. of Plasmas}, {\bf 4}, Issue 8, pp. 082318-082318-17, doi:10.1063/1.2757154

\bibitem[{\it Webb et al.} (2008)]{Webb08} 
Webb, G. M., Ko, C. M., Mace, R.L., McKenzie, J.F. and Zank, G.P. 2008,
Integrable, oblique travelling waves in charge neutral, two-fluid plasmas,
{\it Nonl. Proc. Geophys.}, {\bf 15}, 179-208.





\bibitem[{ \it Webb et al.}(2014a)]{Webb14a}
Webb, G. M., Dasgupta, B., McKenzie, J. F., Hu, Q.,
and Zank, G.P. 2014a:
Local and nonlocal advected invariants and helicities in
magnetohydrodynamics and gas dynamics I: Lie dragging approach,
{\it J. Phys. A., Math. and Theoret.}, {\bf 47}, (2014) 095501 (33pp).
doi:10.1088/1751-8113/49/9/095501, preprint
available at http://arxiv.org/abs/1307.1105.

\bibitem[{\it Webb et al.} (2014b)]{Webb14b}
Webb, G. M., Dasgupta, B., McKenzie, J. F., Hu, Q.,
and Zank, G.P. 2014b:
Local and nonlocal advected invariants and helicities in
magnetohydrodynamics and gas dynamics II: Noether's theorems and Casimirs,
{\it J. Phys. A., Math. and Theoret.}, {\bf 47} (2014) 095502 (31pp),
doi:10.1088/1751-8113/47/9/095502, preprint
available at http://arxiv.org/abs/1307.1038.

\bibitem[{\it Webb et al.} (2014c)]{Webb14c}
Webb, G. M., McKenzie, J.F. and Zank, G.P. 2014c, Multisymplectic magnetohydrodynamics, 
{\it J. Plasma Phys.}, {\bf 80}, part 5, pp. 707-743, doi:10.1017/S0022377814000257, 
also available at http://arxiv.org/abs/1312.4890v4


\bibitem[{\it Webb et al.} (2014d)]{Webb14d}
Webb, G. M., Burrows, R. H., Ao, X., and Zank, G.P. 2014d, Ion acoustic
travelling waves, {\it J. Plasma Phys.}, {\bf 80}, part 2, pp. 147-171,
doi:10.1017/S0022377813001013,
preprint at http://arxiv.org/abs/1312.6406

\bibitem[{\it Webb and Anco}(2015)]{WebbAnco15}
Webb, G. M. and Anco, S.C. 2015, Vorticity and Symplecticity in Multi-Symplectic, Lagrangian Gas Dynamics,
{\it J. Phys. A, Math. and Theoret.}, submitted May 6, 2015. 



 
\bibitem[{\it Zakharov and Kuznetsov}(1997)]{Zakharov97}
Zakharov, V. E. and Kuznetsov, E.A. 1997, Hamiltonian formalism for 
nonlinear waves, {\it Physics-Uspekhi}, {\bf 40}, (11), 1087-1116.
\end{thebibliography}
\end{document}